\title{\LARGE \bf RFID-Assisted Indoor Localization Using Hybrid Wireless Data Fusion}
\author{
\IEEEauthorblockN{Abouzar Ghavami, \textit{Member, IEEE}, and Ali Abedi, \textit{Senior Member, IEEE}
\thanks{The authors are affiliated with Electrical and Computer Engineering Dept., University of Maine, Orono, ME, USA. Emails: \{ \texttt{ghavamip@gmail.com, ali.abedi@maine.edu} \},
\newline This project was financially sponsored by Maine Economic Improvement Fund (MEIF) and University of Maine Office of the Vice President for Research.}
}}
\begin{document}
\maketitle
\newtheorem{Prop}{Proposition}
\newtheorem{Def}{Definition}
\newtheorem{Thm}{Theorem}
\newtheorem{Alg}{Algorithm}
\newtheorem{Lem}{Lemma}
\thispagestyle{empty}
\pagestyle{empty}
\begin{abstract}
Wireless localization is essential for tracking objects in indoor environments. Internet of Things (IoT) enables localization through its diverse wireless communication protocols. In this paper, a hybrid section-based indoor localization method using a developed Radio Frequency Identification (RFID) tracking device and multiple IoT wireless technologies is proposed. In order to reduce the cost of the RFID tags, the tags are installed only on the borders of each section. The RFID tracking device identifies the section, and the proposed wireless hybrid method finds the location of the object inside the section. The proposed hybrid method is analytically driven by linear location estimates obtained from different IoT wireless technologies. The experimental results using developed RFID tracking device and RSSI-based localization for Bluetooth, WiFi and ZigBee technologies verifies the analytical results.
\end{abstract}
\section{Introduction} \label{Sec:Intro}
In recent years, commercial wireless technologies such as Global Positioning System (GPS), Global System for Mobile communication (GSM), Bluetooth and WiFi (Wireless Fidelity) are widely used in personal communication devices such as smart phones and tablets. While other communication protocols are specifically used for data transmission, GPS is specifically designed to help us navigate on the surface of the earth \cite{G08}. However, due to the multipath effects and path loss of GPS signals through different building materials the current commercial GPS systems are unable to provide accurate indoor localization information \cite{NLD14, SWS13}. 

Internet of Things (IoT) provides a communication platform for different wireless technologies including Bluetooth, WiFi, ZigBee, IPv6 Low-power Wireless Personal Area Network (6LoWPAN),  Low Power Wide Area Network (LoRaWAN), Bluetooth Low Energy (BLE), to communicate  through their designated base stations \cite{AAA17, XHL14}. There are various indoor localization methods using wireless technologies such as Bluetooth, WiFi, ZigBee, Radio Frequency Identification (RFID), infrared and acoustic waves. These localization methods use physical characteristics of the transmitted signals such as time of flight, angle of arrival and received signal strength indicator (RSSI) measurements, to estimate the position of the moving objects inside buildings \cite{LDBL07}. 

RSSI-based localization using individual IoT's wireless technologies is proposed in wireless localization studies \cite{YAM14, SS18, XQHY17, SWS13, AMH16}. Localization using WiFi RSSI information is improved through averaging selected maximum RSSI observations in \cite{XQHY17}. The performance of individual RSSI-based localization using BLE, WiFi, ZigBee and LoRaWAN protocols is compared through experiments and logarithmic distance approximations in \cite{SS18}. In our experimental results we also use RSSI-based localization using BLE, WiFi and ZigBee. 
We combine the localization measurements obtained from these wireless protocols, and we obtain more accurate localization.


The concepts of combining different unconstrained estimators with smaller variances has been extensively studied in the literature \cite{GD59, RW74, M95, RKB13}. In this paper, a constrained method of combining different estimators with a proposed generalized distance function is studied. A localization method that combines different localization technologies utilizing prediction techniques is proposed in \cite{RKB13}. Gaussian probability kernel is used to approximate the probability density function of the RSSI at each finger-printed location. A multi-sensor data fusion approach that combines GPS and Inertial Measurement Units (IMU) is proposed in \cite{CDPV06}. Kalman filtering is applied simultaneously to GPS and IMU data. These methods require the covariance matrix of localization error for all the fingerprints in each individual localization method. A linear combination of estimates using RSSI level of WiFi and RSSI level of a stationary RFID device is used to find the optimal location estimate at each individual location in \cite{RKB13}. However, this paper proposes to use a developed portable RFID device to identify the exact location of the tags that are installed in the localization environment. The idea of using portable RFID device for emergency agents tracking is also proposed in \cite{RYTM07}. The paper proposes to use few RFID tags in the main enterance of the building to have a rough location of the person inside the buildig. A linear estimates for data fusion of light, acoustic and RSSI signals is proposed using confidence-weighted averaging fusion method in \cite{KH16}. In this paper, a data fusion method for indoor localization application is proposed that combines multiple estimators that are (i) constrained and (ii) minimizes the localization error for a generalized distance function. 

The generalized distance function that is proposed in this paper allows us to measure generalized distance error factors such as the Mean Square Error (MSE) and Mean Absolute Error (MAE) of the proposed hybrid method. In the case that MSE is the factor of interest, finding the optimal linear hybrid distance estimator is a constrained convex quadratic programming problem (QP). This constrained QP problem can also be solved through efficient and iterative primal-dual interior-point method \cite{W97}, iterative primal-dual coordinate ascent method \cite{B99} and iterative active set method \cite{GI83}.

In this paper, a hybrid localization method is proposed that estimates the locations using linear combination of the estimated locations of each individual localization methods. In this method, a unique weight is assigned to each estimated location such that the distance error of the final estimate with respect to the true location is minimized. As expected, it is analytically proven that the error distance of the hybrid approach is less than each individual localization method. Experimental results also demonstrate that hybrid estimates have less distance measurement error compared to individual localization methods. The rest of this paper is structured as follows. The hybrid localization method is proposed and extended in Section \ref{Sec:Hybrid}. In Section \ref{Sec:Numerical}, performance of the proposed hybrid localization method is compared with individual RSSI-based and RFID-based localization methods. We conclude in Section \ref{Sec:Conclude}.

\section{Proposed Hybrid Localization Method}\label{Sec:Hybrid}
Let us denote $\mathcal{N}$ as the set of $N$ different individual localization methods, numerated as $i$, $i \in \{1, ..., N\}$. Let $\mathbf{P}=(x, y, z)$ denote the position of the object in three dimensional space. Let us consider there are $M$ points with known locations, $\mathbf{P}^{j}$, $j \in \{1, ..., M\}$, used as the fingerprints in the localization environment.We assume that each individual localization method estimates the location of these fingerprints. The location of point $\mathbf{P}^{j}$ is estimated through $i^{th}$ localization method as $\hat{\mathbf{P}}^{j}_{i}(\mathbf{P}^{j})=(\hat{x}^{j}_{i}, \hat{y}^{j}_{i}, \hat{z}^{j}_{i})$.

Let us define $d_{\boldsymbol{\mathcal{V}}}(\mathbf{P}, \mathbf{Q}) \doteq ||\mathbf{P} - \mathbf{Q}||_{\boldsymbol{\mathcal{V}}}$ as the distance in vector function $\boldsymbol{\mathcal{V}}=\left(\mathcal{V}_x, \mathcal{V}_y, \mathcal{V}_z \right)$ of two points $\mathbf{P}$ and $\mathbf{Q}$, $\mathbf{P}=(x_p, y_p, z_p)$ and $\mathbf{Q}=(x_q, y_q, z_q)$, as
\begin{gather}
d_{\boldsymbol{\mathcal{V}}}\left(\mathbf{P},\mathbf{Q}\right) \doteq ||\mathbf{P} - \mathbf{Q}||_{\boldsymbol{\mathcal{V}}} = \boldsymbol{\mathcal{V}}\left(\mathbf{P}-\mathbf{Q}\right) \nonumber \\
= \mathcal{V}_x\left(x_p - x_q \right) + \mathcal{V}_{y}\left( y_p - y_q \right) + \mathcal{V}_z \left( z_p - z_q \right),
\label{Dv}
\end{gather}
where $\mathcal{V}_{u}\left(t\right)$, $u \in \{x, y, z\}$, is a strictly convex function of $t$. In special case, for $\mathcal{V}_u(t)=|t|^p$, $p>1$, the distance in vector function $\boldsymbol{\mathcal{V}}=\left(|t|^p, |t|^p, |t|^p\right)$ is the power $p$ of $p$-norm distance between $\boldsymbol{P}$ and $\boldsymbol{Q}$. For $p = 2$, the norm is the mean square error of the distance between the true point and its approximated measurement in the Eucledian space.


The hybrid localization method to find the optimal estimated location of point $\mathbf{P}$, $\hat{\mathbf{P}}(\mathbf{P})$, is proposed as the linear combination of different localization methods:
\begin{equation}
\hat{\mathbf{P}}\left(\mathbf{P}\right)=\left(\sum_{i=1}^{N}\alpha_{x, i} \hat{x}_{i}, \sum_{i=1}^{N}\alpha_{y, i} \hat{y}_{i}, \sum_{i=1}^{N}\alpha_{z, i} \hat{z}_{i}\right)
\label{LinEst}
\end{equation}
\begin{equation}
\begin{gathered}
\text{s.t.    } \sum_{i=1}^{N}\alpha_{u, i}=1, \hspace{5mm} \forall u \in \{x,y,z\} \\
\alpha_{u, i}\geq 0, \hspace{5mm} \forall i \in \{1, ..., N\}, \forall u \in \{x,y,z\}.
\end{gathered}
\label{Cons}
\end{equation}
where $\boldsymbol{\alpha}_{i} = (\alpha_{ x, i}, \alpha_{y, i}, \alpha_{z, i})$ is the vector of the coefficients for taking into account each individual coordinates of $x$, $y$ and $z$ that are estimated by localization method $i$.

Let $\boldsymbol{\alpha}_{u}=(\alpha_{u, 1}, ..., \alpha_{u, N})$ denote the coefficient vector of different localization methods in direction $u$, $u \in \{x, y, z\}$. Let $\mathcal{A}_u=\{\boldsymbol{\alpha}_{u}| \alpha_{u, i} \geq 0, \sum_{i=1}^{N}\alpha_{u, i}=1\}$ denote all the feasible sets of coefficient vectors in direction $u$. $\mathcal{A}_u$ can be shown to be convex and closed, hence it is a compact set. Let $\boldsymbol{\alpha} = (\boldsymbol{\alpha}_{x},\boldsymbol{\alpha}_{y}, \boldsymbol{\alpha}_{z})$ denote the coefficient vector of all localization methods. 
Let $\mathcal{A}=\{\boldsymbol{\alpha}|\alpha_{u, i} \geq 0: \sum_{i=1}^{N}\alpha_{u, i}=1, \forall u \in \{x,y,z\}\}$ denote the set of feasible vectors $\boldsymbol{\alpha}$ that satisfy the constraints in (\ref{Cons}). We have $\mathcal{A}=\mathcal{A}_{x} \times \mathcal{A}_{y} \times \mathcal{A}_{z}$. As $\mathcal{A}_{u}$, $u \in \{x, y, z\}$, is compact, $\mathcal{A}$ is also a compact set. 

The optimal coefficient vector, $\boldsymbol{\alpha}^{*}$, minimizes the error distance of the true location of the points and their estimated locations in vector function $\boldsymbol{\mathcal{V}}^{j}$, $j=1,..., M$ as in the following:
\begin{gather}
\boldsymbol{\alpha}^{*}=\arg\min_{\boldsymbol{\alpha} \in \mathcal{A}}\frac{1}{M}\sum_{j=1}^{M}d_{\boldsymbol{\mathcal{V}}^{j}}(\hat{\mathbf{P}}(\mathbf{P}^j), \mathbf{P}^j) \nonumber \\
=\arg\min_{\boldsymbol{\alpha} \in \mathcal{A}}\sum_{u \in \{x, y, z\}}\sum_{j=1}^{M}\mathcal{V}_u^{j}\left(\sum_{i=1}^{N}\alpha_{u, i}\hat{u}^{j}_{i}-u^{j}\right).
\label{alphaOpt}
\end{gather}
As $\boldsymbol{\alpha}_u$, $u \in \{x,y,z\}$, are considered to be independent from each other, we have:
\begin{equation}
\boldsymbol{\alpha}_{u}^{*}=\arg\min_{\boldsymbol{\alpha}_u \in \mathcal{A}_{u}} \sum_{j=1}^{M}\mathcal{V}^{j}_{u}\left(\sum_{i=1}^{N}\alpha_{u, i}\hat{u}^{j}_{i}-u^{j}\right).
\label{alphau}
\end{equation}
Let us denote
\begin{equation}
f_{u}(\boldsymbol{\alpha}_{u})=\sum_{j=1}^{M}\mathcal{V}^{j}_{u}\left(\sum_{i=1}^{N}\alpha_{u, i} \hat{u}^{j}_{i}-u^{j}\right), \hspace{3mm} u \in \{x, y, z\}.
\label{falpha}
\end{equation}
Let us denote $\mathbf{u}_{i}^{j}=\left( \hat{u}_{i}^{1}, ..., \hat{u}_{i}^{M}\right)$ as the location information vector of localization method $i$ in direction $u$ and $\mathbf{U}_{M \times N}$ as the matrix of location information with $U_{ij}= \hat{u}_{i}^{j}$. 

Let $\mathbf{C}$ denote the \emph{information location correlation matrix} that is defined as
\begin{gather}
\mathbf{C} = \mathbf{U}^{T} \mathbf{U},
\label{CorrelationMatrix}
\end{gather}
where $C_{ij} = \sum_{m=1}^{M} \hat{u}^{m}_{i} \hat{u}_{j}^{m}$.

It is assumed that all of the localization information vectors are linearly independent from other vectors. Otherwise the linearly dependent vectors are redundant, and they are removed from the localization information vectors set.
\begin{Lem}
The \emph{information location correlation matrix}, $\mathbf{C}$, is a positive definite matrix.
\label{CPD}
\end{Lem}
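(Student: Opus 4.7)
The plan is to exploit the Gram-matrix structure $\mathbf{C}=\mathbf{U}^{T}\mathbf{U}$ and reduce positive definiteness to the linear-independence hypothesis stated just before the lemma. The matrix is automatically symmetric, since $\mathbf{C}^{T}=(\mathbf{U}^{T}\mathbf{U})^{T}=\mathbf{U}^{T}\mathbf{U}=\mathbf{C}$, so the task reduces to showing that the associated quadratic form $\mathbf{v}^{T}\mathbf{C}\mathbf{v}$ is strictly positive for every nonzero $\mathbf{v}\in\mathbb{R}^{N}$.

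The core step is the identity $\mathbf{v}^{T}\mathbf{C}\mathbf{v}=\mathbf{v}^{T}\mathbf{U}^{T}\mathbf{U}\mathbf{v}=(\mathbf{U}\mathbf{v})^{T}(\mathbf{U}\mathbf{v})=\|\mathbf{U}\mathbf{v}\|_{2}^{2}\ge 0$, which already yields positive semi-definiteness. To upgrade this to strict positivity, I would observe that equality holds only when $\mathbf{U}\mathbf{v}=0$, i.e., when $\sum_{i=1}^{N}v_{i}\mathbf{u}_{i}=0$, where $\mathbf{u}_{i}=(\hat{u}_{i}^{1},\ldots,\hat{u}_{i}^{M})$ is the $i$th column of $\mathbf{U}$. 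The paragraph immediately preceding the lemma states that the location information vectors are assumed linearly independent (any redundant vector has been removed), so this vanishing forces $v_{i}=0$ for all $i$, contradicting $\mathbf{v}\ne 0$. Therefore $\mathbf{v}^{T}\mathbf{C}\mathbf{v}>0$ strictly, and $\mathbf{C}$ is positive definite.

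There is essentially no obstacle here; the only bookkeeping points are (i) identifying the location information vectors $\mathbf{u}_{i}$ with the columns of $\mathbf{U}$ (rather than its rows) in the paper's notation and (ii) implicitly requiring $N\le M$ so that $N$ linearly independent vectors in $\mathbb{R}^{M}$ can exist. Aside from this, the argument is the standard textbook proof that the Gram matrix of a linearly independent family is positive definite, and it can be presented in a few lines.
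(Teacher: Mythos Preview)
Your proposal is correct and follows essentially the same argument as the paper's proof: compute $\mathbf{v}^{T}\mathbf{C}\mathbf{v}=\|\mathbf{U}\mathbf{v}\|_{2}^{2}\ge 0$ and use the assumed linear independence of the location information vectors to conclude that equality occurs only at $\mathbf{v}=\mathbf{0}$. Your added remarks on symmetry and the implicit requirement $N\le M$ are fine extra bookkeeping, but the core route is identical to the paper's.
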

\begin{proof}
See Appendix A.
\end{proof}
 
Let us denote $\hat{u}^{j}_{\max}=\max_{1 \leq i \leq N} \hat{u}_i^{j}$ and $\hat{u}^{j}_{\min}=\min_{1 \leq i \leq N} \hat{u}_i^{j}$. As $\alpha_{u, i}$, $\forall i, u$, satisfy the constraints in (\ref{Cons}), we have: 
\begin{equation}
\hat{u}_{\min}^{j}-u^j < \sum_{i=1}^{N} \alpha_{u, i} \hat{u}_{i}^{j} - u^{j}< \hat{u}_{\max}^{j} - u^{j}.
\end{equation}
Let us denote 
\begin{equation}
B_{u}^{j}=\max_{t \in [\hat{u}_{\min}^{j}-u^j, \hat{u}_{\max}^{j} - u^{j}]} {\mathcal{V}_{u}^{j}}'' \left(t\right),
\end{equation}
and 
\begin{equation}
b_{u}^{j}=\min_{t \in [\hat{u}_{\min}^{j}-u^j, \hat{u}_{\max}^{j} - u^{j}]} {\mathcal{V}_{u}^{j}}'' \left(t\right).
\end{equation}
Fom strongly convexity of $\mathcal{V}_{u}^{j}(x)$, $\mathcal{V}_{u}^{j}(x) '' > 0$, thus: $b_{u}^{j} > 0$. Let us denote
\begin{equation}
L^{\max} = \max_{i, t} \left \{ \sum_{j=1}^{M} \left | \hat{u}^{j}_{i} \hat{u}_{t}^{j} \right | B_{u}^{j} \right \}
\end{equation}
and
\begin{equation}
l^{\min} = \min_{i} \left\{ \sum_{j=1}^{M} \left( \hat{u}_{i}^{j} \right)^2 b_{u}^{j} \right \}.
\end{equation}
As $\exists j: \hat{u}_{i}^{j} \neq 0$, and $\forall j: b_{u}^{j}>0$, thus: $l^{\min}>0$. 

The objective function in (\ref{alphaOpt}) is convex in $\boldsymbol{\alpha}$ that let us to solve the optimization problem in (\ref{alphaOpt}) using gradient projection method with constant step size $\beta_u \in (0, \frac{2}{N L^{\max}})$ \cite{B99}. The optimal solution is derived through the iterative update procedure
\begin{equation}
\boldsymbol{\alpha}_{u}^{k+1}=\mathcal{P}_{\mathcal{A}_{u}}\left[\boldsymbol{\alpha}_{u}^{k} - \beta_{u} \nabla f_{u}(\boldsymbol{\alpha}_{u}^k)\right], \hspace{5mm} k \in \mathbb{Z}^{+}.
\label{alphak}
\end{equation}
where $\mathcal{P}_{\mathcal{A}_{u}}[\boldsymbol{\alpha}_u]$ is the projection of the vector $\boldsymbol{\alpha}_{u}$ over the hyperplane $\mathcal{A}_{u}$. The vector $\boldsymbol{\alpha}_{u}^{0}$ is an arbitrary vector in $\mathcal{A}_{u}$. 

\begin{Thm}
If $\mathcal{A}_u$ is not empty, for $0 < \beta_u <\frac{2}{N L^{\max}}$, $u \in \{x, y, z\}$, $f_{u}(\boldsymbol{\alpha}^k)$ decreases monotonically as $k$ increases, and  $f_{u}^*=\lim_{k \rightarrow \infty}f(\boldsymbol{\alpha}_{u}^{k})$ minimizes $f_u(\boldsymbol{\alpha}_u)$ over $\mathcal{A}_u$.
\label{Thm1}
\end{Thm}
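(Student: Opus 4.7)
The plan is to invoke the classical convergence analysis of the projected-gradient method on a compact convex feasible set for a convex objective with Lipschitz-continuous gradient. The step-size window $(0,\, 2/(NL^{\max}))$ is precisely the regime that yields a strict descent inequality, which in turn gives both monotonicity and convergence to the global minimizer.

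First I would bound the Lipschitz constant of $\nabla f_u$ on $\mathcal{A}_u$. Differentiating (\ref{falpha}) twice gives the Hessian $H_{it}(\boldsymbol{\alpha}_u) = \sum_{j=1}^M \hat{u}_i^j \hat{u}_t^j \, {\mathcal{V}_u^j}''\!\bigl(\sum_i \alpha_{u,i}\hat{u}_i^j - u^j\bigr)$. For any $\boldsymbol{\alpha}_u \in \mathcal{A}_u$, the inner argument is a convex combination of the values $\hat{u}_i^j$ and hence lies in $[\hat{u}_{\min}^j - u^j,\, \hat{u}_{\max}^j - u^j]$; together with the definition of $L^{\max}$ this yields the entry-wise bound $|H_{it}(\boldsymbol{\alpha}_u)| \leq L^{\max}$. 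Because $H$ is symmetric, Gershgorin's row-sum bound gives $\|H(\boldsymbol{\alpha}_u)\|_2 \leq NL^{\max}$, so $\nabla f_u$ is $NL^{\max}$-Lipschitz on $\mathcal{A}_u$, and convexity of $f_u$ is immediate since $b_u^j > 0$ makes $H \succeq 0$.

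Next I would couple the quadratic descent lemma for Lipschitz gradients, $f_u(\boldsymbol{\alpha}_u^{k+1}) \leq f_u(\boldsymbol{\alpha}_u^k) + \langle \nabla f_u(\boldsymbol{\alpha}_u^k),\, \boldsymbol{\alpha}_u^{k+1}-\boldsymbol{\alpha}_u^k\rangle + \tfrac{NL^{\max}}{2}\|\boldsymbol{\alpha}_u^{k+1}-\boldsymbol{\alpha}_u^k\|^2$, with the variational inequality characterizing the Euclidean projection in (\ref{alphak}). Specializing the latter to $y = \boldsymbol{\alpha}_u^k$ yields $\langle \nabla f_u(\boldsymbol{\alpha}_u^k),\, \boldsymbol{\alpha}_u^{k+1}-\boldsymbol{\alpha}_u^k\rangle \leq -\tfrac{1}{\beta_u}\|\boldsymbol{\alpha}_u^{k+1}-\boldsymbol{\alpha}_u^k\|^2$. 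Combining the two produces the one-step descent $f_u(\boldsymbol{\alpha}_u^{k+1}) - f_u(\boldsymbol{\alpha}_u^k) \leq -\bigl(\tfrac{1}{\beta_u} - \tfrac{NL^{\max}}{2}\bigr)\|\boldsymbol{\alpha}_u^{k+1}-\boldsymbol{\alpha}_u^k\|^2$, which is strictly negative whenever $0 < \beta_u < 2/(NL^{\max})$, proving the first claim.

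Finally, because $f_u \geq 0$ on $\mathcal{A}_u$, the monotone sequence $\{f_u(\boldsymbol{\alpha}_u^k)\}$ converges, and telescoping the descent inequality forces $\|\boldsymbol{\alpha}_u^{k+1}-\boldsymbol{\alpha}_u^k\| \to 0$. Compactness of $\mathcal{A}_u$ then supplies a convergent subsequence with limit $\boldsymbol{\alpha}_u^\infty \in \mathcal{A}_u$; passing to the limit in (\ref{alphak}) gives the fixed-point identity $\boldsymbol{\alpha}_u^\infty = \mathcal{P}_{\mathcal{A}_u}[\boldsymbol{\alpha}_u^\infty - \beta_u\nabla f_u(\boldsymbol{\alpha}_u^\infty)]$, which on a convex set is equivalent to the first-order optimality condition for minimizing the convex $f_u$ over $\mathcal{A}_u$. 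Uniqueness of the minimizer follows from strong convexity of $f_u$: the Hessian dominates the matrix $\sum_j b_u^j \hat{u}_i^j\hat{u}_t^j$, which is the $b_u^j$-weighted analogue of the matrix in Lemma \ref{CPD} and is positive definite by exactly the argument used there (with modulus governed by $l^{\min}>0$). Hence the full sequence $\{\boldsymbol{\alpha}_u^k\}$ converges to this unique minimizer, so $f_u^\ast = \min_{\mathcal{A}_u} f_u$. The main obstacle is the Lipschitz-constant computation: the entry-wise bound $L^{\max}$ is straightforward, but converting it into the sharp spectral estimate needed to make the step-size threshold $2/(NL^{\max})$ work is the step on which the whole descent argument rests; a secondary subtlety is that monotone decrease of function values alone does not pin down a single limit iterate, so the Lemma \ref{CPD}-style strong-convexity argument is essential for identifying $f_u^\ast$ with the exact minimum.
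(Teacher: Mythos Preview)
Your proposal is correct and follows essentially the same route as the paper: both establish that $\nabla f_u$ is $NL^{\max}$-Lipschitz on $\mathcal{A}_u$ (the paper via the Mean Value Theorem and a direct norm estimate, you via a Hessian entry bound plus Gershgorin), derive the identical one-step descent inequality $f_u(\boldsymbol{\alpha}_u^{k+1})-f_u(\boldsymbol{\alpha}_u^{k})\le(\tfrac{NL^{\max}}{2}-\tfrac{1}{\beta_u})\|\boldsymbol{\alpha}_u^{k+1}-\boldsymbol{\alpha}_u^{k}\|^2$ (the paper by citing Bertsekas' Proposition~2.3.2, you by spelling out the descent lemma plus the projection variational inequality), and then combine monotone convergence, Bolzano--Weierstrass compactness of $\mathcal{A}_u$, and the fixed-point/stationarity characterization together with convexity to conclude optimality. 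Your additional strong-convexity step---observing that the Hessian dominates the positive-definite $b_u^j$-weighted Gram matrix of Lemma~\ref{CPD}, so the minimizer is unique and the full iterate sequence (not just the function values) converges---is a mild sharpening beyond what the paper records.
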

\begin{proof}
See Appendix B. 
\end{proof}
The optimal projected coefficient vector, $\boldsymbol{\alpha}_u^{k+1}$, at each iteration of the gradient projection method in (\ref{alphak}) can be found as described next.
\subsection{Optimal Projection Algorithm} \label{OptimalProjectionAlgorithm}
Let us denote $\boldsymbol{\omega}_{u}^{k}=\nabla f_u(\boldsymbol{\alpha}_u^{k})$, i.e. $\omega_{u, i}^{k}=\frac{\partial f_{u}(\boldsymbol{\alpha}_u^{k})}{\partial \alpha_{u, i}}$, $i=1, ..., N$. As $\mathcal{A}_u$ is closed and convex, based on the Projection Theorem in \cite{B99}, there is a unique projection vector $\boldsymbol{\alpha}_u^{k+1} \in \mathcal{A}_u$ that minimizes $||\boldsymbol{\alpha}_u - \left(\boldsymbol{\alpha}_u^k - \beta_{u} \boldsymbol{\omega}_u^{k})\right)||_2^2$, $\forall \boldsymbol{\alpha}_u \in \mathcal{A}_u$, as in (\ref{alphak}). Let us denote $\boldsymbol{\alpha}_u^{*}=\boldsymbol{\alpha}_{u}^{k+1}$. The projection problem in (\ref{alphak}) is rewritten as
\begin{gather}
\boldsymbol{\alpha}_u^{*}=\arg\min_{\boldsymbol{\alpha} \in \mathcal{A}_u} ||\boldsymbol{\alpha}_u - \left(\boldsymbol{\alpha}_u^{k}-\beta_{u} \boldsymbol{\omega}_u^{k}\right)||_2^2 \nonumber \\
=\arg\min_{\boldsymbol{\alpha} \in \mathcal{A}_u} \sum_{i=1}^{N}\left(\alpha_{u, i} + \beta_{u} \omega_{u, i}^{k} - \alpha_{u, i}^{k}\right)^2.
\label{Opt2}
\end{gather}
Let us define $[x]^+ = \max(x, 0)$, and denote $c^{k}_{u, i}=\beta_{u} \omega_{u, i}^{k} - \alpha_{u, i}^{k}$. The least-square optimization problem in (\ref{Opt2}) is solved through following Lemma.
\begin{Lem}
The optimal solution of (\ref{Opt2}) is uniquely derived as $\alpha_{u, i}^*=[\lambda_u^* - c^{k}_{u, i}]^{+}$, $i = 1, ..., N$, where $\lambda_u^*$ is a unique constant in the interval $\Lambda_u=[\min_{i}c_{u, i}^{k}, 1+\min_{i} c_{u, i}^{k}]$.
\label{Lem1}
\end{Lem}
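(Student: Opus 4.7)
The plan is to prove Lemma \ref{Lem1} by applying the Karush--Kuhn--Tucker (KKT) conditions to the convex quadratic program in (\ref{Opt2}) and then identifying $\lambda_u^*$ as the Lagrange multiplier of the equality constraint, whose value is pinned down by the simplex constraint $\sum_i \alpha_{u,i}=1$.

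First I would set up the Lagrangian
\[
L(\boldsymbol{\alpha}_u,\lambda,\boldsymbol{\mu}) = \sum_{i=1}^{N}\bigl(\alpha_{u,i}+c_{u,i}^{k}\bigr)^{2} \;-\; 2\lambda\Bigl(\sum_{i=1}^{N}\alpha_{u,i}-1\Bigr) \;-\; 2\sum_{i=1}^{N}\mu_{i}\,\alpha_{u,i},
\]
with $\mu_i\geq 0$ (the factor $2$ is absorbed for convenience). The objective is strictly convex and coercive, and $\mathcal{A}_u$ is nonempty, closed, and convex, so the minimizer is unique by the Projection Theorem already invoked in the paragraph preceding (\ref{Opt2}). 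Because the constraints are affine, Slater's condition is automatic and KKT conditions are necessary and sufficient. Stationarity gives $\alpha_{u,i}+c_{u,i}^{k}=\lambda+\mu_{i}$; combined with $\mu_i\alpha_{u,i}=0$ and $\mu_i,\alpha_{u,i}\geq 0$, a simple case analysis on whether $\alpha_{u,i}$ is active yields the closed form $\alpha_{u,i}^{*}=[\lambda-c_{u,i}^{k}]^{+}$. Renaming the multiplier $\lambda$ as $\lambda_u^{*}$ produces the stated expression.

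Next I would determine $\lambda_u^*$ by substituting into $\sum_{i}\alpha_{u,i}=1$, i.e., by studying the scalar equation
\[
g(\lambda)\;\doteq\;\sum_{i=1}^{N}\bigl[\lambda-c_{u,i}^{k}\bigr]^{+}\;=\;1.
\]
I would observe that $g$ is continuous, piecewise linear, and non-decreasing, with breakpoints at the values $\{c_{u,i}^{k}\}_{i=1}^{N}$. Evaluating at the endpoints of $\Lambda_u$ gives $g(\min_i c_{u,i}^{k})=0<1$, and at $\lambda=1+\min_i c_{u,i}^{k}$ the term corresponding to an index achieving the minimum contributes exactly $1$ while the others are nonnegative, so $g\bigl(1+\min_i c_{u,i}^{k}\bigr)\geq 1$. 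By the intermediate value theorem, a solution $\lambda_u^{*}\in\Lambda_u$ exists. For uniqueness I would note that whenever $\lambda>\min_i c_{u,i}^{k}$, at least one summand $[\lambda-c_{u,i}^{k}]^{+}$ is strictly positive and strictly increasing in $\lambda$, so $g$ is strictly increasing on $(\min_i c_{u,i}^{k},\infty)$ and can hit the value $1$ at only one point.

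The main obstacle I anticipate is the bookkeeping around the KKT case analysis (handling ties among the $c_{u,i}^{k}$ and simultaneously verifying primal feasibility $\sum[\lambda_u^{*}-c_{u,i}^{k}]^{+}=1$ once $\lambda_u^{*}$ is defined implicitly). Everything else reduces to monotonicity and continuity of the piecewise-linear map $g(\lambda)$, which is routine; the factor-of-$2$ convention in the Lagrangian is the only subtle point in matching the statement's normalization $\alpha_{u,i}^{*}=[\lambda_u^{*}-c_{u,i}^{k}]^{+}$ rather than $[\lambda_u^{*}/2-c_{u,i}^{k}]^{+}$.
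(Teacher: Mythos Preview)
Your proposal is correct and follows essentially the same route as the paper: a Lagrangian/KKT argument yields the thresholded form $\alpha_{u,i}^{*}=[\lambda_u^{*}-c_{u,i}^{k}]^{+}$, and then the monotone piecewise-linear map $g(\lambda)=\sum_i[\lambda-c_{u,i}^{k}]^{+}$ is used with the intermediate value theorem on $\Lambda_u$ to pin down $\lambda_u^{*}$ uniquely. The only cosmetic difference is that the paper dualizes just the equality constraint (citing Bertsekas, Proposition~3.4.1), keeps the nonnegativity as the primal domain, and completes the square to separate the minimization over each $\alpha_{u,i}\geq 0$, whereas you introduce explicit multipliers $\mu_i$ for the inequality constraints and do the KKT case analysis; both paths are standard and lead to the same closed form.
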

\begin{proof}
See Appendix C.
\end{proof}
The constant value $\lambda_u^*$ is found through bisection method in $O(N \log(\frac{1}{\epsilon}))$ algorithm running time with maximum predetermined error of $\epsilon > 0$. Its exact value can also be found in $O(N\log(N))$ algorithm running time as in the following Lemma.
\begin{Lem}
Let $m_1 \leq ... \leq m_N$ denote the ordered sequence of $\{c_{i,u}^{k}\}_{i=1}^{N}$, and let us set $m_{N+1}=m_1 + 1$. There exists a unique value $n$, $1 \leq n \leq N$ such that the constant value for $\lambda^{*}_{u}$ in Lemma \ref{Lem1} is derived as
\begin{equation}
m_n \leq \lambda_u^{*}=\frac{1 + \sum_{i=1}^{n}m_i}{n} \leq m_{n+1}.
\label{lambdau}
\end{equation}
\label{Lem2}
\end{Lem}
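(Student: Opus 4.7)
The plan is to parameterize the solution from Lemma \ref{Lem1} by the size of the active set and reduce the problem to a purely algebraic identity. First I would invoke Lemma \ref{Lem1} to write $\alpha_{u,i}^* = [\lambda_u^* - c_{u,i}^k]^+$, so that index $i$ contributes strictly positively if and only if $c_{u,i}^k < \lambda_u^*$. After relabeling the coefficients by the sorted values $m_1 \leq m_2 \leq \cdots \leq m_N$, the active set of indices is necessarily a prefix $\{1,\dots,n\}$ for some integer $n \geq 1$, since the simplex constraint $\sum_i \alpha_{u,i}^* = 1$ forces at least one component to be strictly positive.

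Second, substituting the active prefix into $\sum_{i=1}^{N} \alpha_{u,i}^* = 1$ collapses the constraint to
\[
\sum_{i=1}^{n} (\lambda_u^* - m_i) = 1,
\]
which rearranges directly into $\lambda_u^* = (1 + \sum_{i=1}^{n} m_i)/n$. The definition of the active set then imposes $\lambda_u^* > m_n$ (so that index $n$ is itself active) and $\lambda_u^* \leq m_{n+1}$ (so that index $n+1$ is inactive), which are exactly the bracketing inequalities in (\ref{lambdau}). The auxiliary value $m_{N+1} := m_1 + 1$ covers the boundary case $n = N$; this is consistent because Lemma \ref{Lem1} already guarantees $\lambda_u^* \leq m_1 + 1$ via $\lambda_u^* \in \Lambda_u$, so the upper bracket is automatically satisfied when the entire index set is active.

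Existence of such an $n$ follows because the active set corresponding to the unique $\lambda_u^*$ delivered by Lemma \ref{Lem1} has a well-defined cardinality in $\{1, \dots, N\}$. For uniqueness, if two distinct indices $n \neq n'$ both satisfied the bracketing, the displayed formula would force two different evaluations to equal the common $\lambda_u^*$, while the cardinality $|\{i : c_{u,i}^k < \lambda_u^*\}|$ is a single integer, giving a contradiction. The main delicacy I anticipate is the bookkeeping around ties $m_n = m_{n+1}$ and the boundary $n = N$: one must verify that the strict/weak inequalities implied by the $[\cdot]^+$ operator are compatible with the $\leq$ signs in (\ref{lambdau}), and that the artificial cap $m_{N+1} = m_1 + 1$ indeed produces a vacuously satisfied upper bound in that extreme case. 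Once this careful handling is in place, the remainder is a direct algebraic consequence of Lemma \ref{Lem1}.
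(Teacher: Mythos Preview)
Your proposal is correct and takes essentially the same route as the paper: both reduce to solving $\sum_{i=1}^{n}(\lambda_u^{*}-m_i)=1$ once the bracketing interval $[m_n,m_{n+1}]$ is identified, with the paper locating $n$ via the sign change of $g$ at the sorted breakpoints and you locating it as the cardinality of the active set $\{i:m_i<\lambda_u^{*}\}$. Your explicit flagging of the tie case $m_n=m_{n+1}$ and the boundary $n=N$ is, if anything, more careful than the paper's own argument, which asserts strict monotonicity of $y(n)$ without addressing repeated values.
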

\begin{proof}
See Appendix D.
\end{proof}
It takes $O(N \log(N))$ algorithm running time to order the sequence $\{c_{i,u}^{k}\}_{i=1}^{N}$ as $\{m_i\}_{i=1}^{N}$. It also takes $O(N)$ algorithm running time to calculate $\lambda_u^{*}$ for all $n$ as in (\ref{lambdau}). It results in total number of $O(N \log(N))$ calculations to find $\lambda_{u}^{*}$.

Let us denote 
\begin{equation}
q = \max\{|1 - \beta_{u} N l^{\min}|, |1 - \beta_{u} N L^{\max}|\}
\label{q}
\end{equation}
From $0 < \beta_u < \frac{2}{N L^{\max}}$ and $0 < l^{\min} \leq L^{\max}$, we have: $0 < q < 1$. Let the update process in (\ref{alphak}) start with $\boldsymbol{\alpha}_{u}^{0}$, $\boldsymbol{\alpha}_{u, i}^0 = \frac{1}{N}$, $i = 1,..., N$.
\begin{Thm}
The iterative update process in (\ref{alphak}) converges to a solution in the $\epsilon$-neighborhood of the optimal solution with linear convergence rate of $q$ in $O \left( N \log \left( N \right) \log_{\left( \frac{1}{q} \right) } \left( \frac{\sqrt{1 - \frac{1}{N}}}{ \mathlarger{ \mathlarger{ \epsilon } } } \right) \right)$ number of calculations.
\label{Thm2}
\end{Thm}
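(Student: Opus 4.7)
The plan is to show that (i) the projected gradient map in (\ref{alphak}) is a strict contraction in the $\ell_2$ norm with modulus $q<1$, (ii) the initial iterate $\boldsymbol{\alpha}_u^{0}=(1/N,\dots,1/N)$ has distance at most $\sqrt{1-1/N}$ to any point in the simplex $\mathcal{A}_u$, and (iii) each iteration costs $O(N\log N)$ by Lemma \ref{Lem2}. Combining these three ingredients yields the stated complexity.

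First I would establish the smoothness and strong convexity constants for $f_u$ on $\mathcal{A}_u$. The Hessian of $f_u$ at any feasible $\boldsymbol{\alpha}_u$ has entries $\left[\nabla^2 f_u(\boldsymbol{\alpha}_u)\right]_{i,t}=\sum_{j=1}^{M}\hat{u}_i^{j}\hat{u}_t^{j}\,{\mathcal{V}_u^j}''\!\left(\sum_i \alpha_{u,i}\hat{u}_i^{j}-u^{j}\right)$. Because $\boldsymbol{\alpha}_u\in\mathcal{A}_u$, the argument of ${\mathcal{V}_u^j}''$ lies in $[\hat{u}_{\min}^{j}-u^{j},\hat{u}_{\max}^{j}-u^{j}]$, so each entry is bounded in absolute value by $\sum_{j}|\hat{u}_i^{j}\hat{u}_t^{j}|B_u^{j}\le L^{\max}$. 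Hence the spectral norm is at most $NL^{\max}$, showing $\nabla f_u$ is $NL^{\max}$–Lipschitz. The diagonal entries are at least $l^{\min}$; together with the positive definiteness of $\mathbf{C}$ from Lemma \ref{CPD} (applied with the weights ${\mathcal{V}_u^j}''\ge b_u^j>0$), one obtains the strong convexity modulus $\ge Nl^{\min}$, yielding $f_u$ is $Nl^{\min}$–strongly convex.

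Next I would invoke the classical contraction estimate for projected gradient: since $\mathcal{P}_{\mathcal{A}_u}$ is non-expansive (it is a projection onto a closed convex set) and the map $\boldsymbol{\alpha}_u \mapsto \boldsymbol{\alpha}_u - \beta_u \nabla f_u(\boldsymbol{\alpha}_u)$ is a contraction with factor $\max\{|1-\beta_u N l^{\min}|,\,|1-\beta_u N L^{\max}|\}=q$ whenever $0<\beta_u<2/(NL^{\max})$, composition gives $\|\boldsymbol{\alpha}_u^{k+1}-\boldsymbol{\alpha}_u^{*}\|_2\le q\,\|\boldsymbol{\alpha}_u^{k}-\boldsymbol{\alpha}_u^{*}\|_2$, and iterating produces $\|\boldsymbol{\alpha}_u^{k}-\boldsymbol{\alpha}_u^{*}\|_2\le q^{k}\,\|\boldsymbol{\alpha}_u^{0}-\boldsymbol{\alpha}_u^{*}\|_2$. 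To turn this into the stated bound I would compute the worst-case initial distance: over the simplex $\mathcal{A}_u$ the function $\boldsymbol{\alpha}_u\mapsto \|\boldsymbol{\alpha}_u-(1/N,\dots,1/N)\|_2$ is convex and therefore maximized at a vertex $e_i$, giving the value $\sqrt{(1-1/N)^{2}+(N-1)/N^{2}}=\sqrt{1-1/N}$. Hence $\|\boldsymbol{\alpha}_u^{k}-\boldsymbol{\alpha}_u^{*}\|_2\le q^{k}\sqrt{1-1/N}$, and requiring the right-hand side to be at most $\epsilon$ forces $k\ge \log_{1/q}\!\bigl(\sqrt{1-1/N}/\epsilon\bigr)$.

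Finally I would account for per-iteration cost: computing $\boldsymbol{\omega}_u^{k}=\nabla f_u(\boldsymbol{\alpha}_u^{k})$ and the shift $c_{u,i}^{k}$ is $O(N)$ (plus an $M$-dependent factor absorbed into the calculation model used in the theorem), while the projection step, via Lemma \ref{Lem2}, requires sorting the $N$ values $\{c_{u,i}^{k}\}$ and then an $O(N)$ sweep, totaling $O(N\log N)$. Multiplying the number of iterations by this per-iteration cost gives the advertised $O\!\bigl(N\log N\cdot \log_{1/q}(\sqrt{1-1/N}/\epsilon)\bigr)$ bound. The main obstacle I anticipate is the rigorous justification of the strong convexity constant $Nl^{\min}$: $l^{\min}$ is defined as a minimum of diagonal sums rather than as a spectral lower bound, so the argument relies on combining Lemma \ref{CPD} with the pointwise bound ${\mathcal{V}_u^j}''\ge b_u^{j}$ and then invoking a Gershgorin/norm comparison to convert diagonal control into an eigenvalue control matched to the $N$-factor appearing in $q$.
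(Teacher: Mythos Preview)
Your proposal follows essentially the same route as the paper: show the projected-gradient map is a $q$-contraction (the paper does this by writing $\nabla f_u(\boldsymbol{\alpha}_u^k)-\nabla f_u(\boldsymbol{\alpha}_u^{*})=\mathbf{D}^{k}(\boldsymbol{\alpha}_u^{k}-\boldsymbol{\alpha}_u^{*})$ via the Mean Value Theorem and bounding $\|I-\beta_u\mathbf{D}^{k}\|$ through row-sum bounds $l^{\min}\le D^{k}_{im}\le L^{\max}$, which is precisely the step you flag as the obstacle), bound $\|\boldsymbol{\alpha}_u^{0}-\boldsymbol{\alpha}_u^{*}\|_2\le\sqrt{1-1/N}$ (the paper expands the square directly using $\sum_i(\alpha_{u,i}^{*})^{2}\le 1$, equivalent to your vertex-maximization argument), and multiply by the $O(N\log N)$ per-iteration projection cost from Lemma~\ref{Lem2}. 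The only cosmetic difference is that you name the smoothness/strong-convexity constants, whereas the paper works with the explicit linearization matrix $\mathbf{D}^{k}$.
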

\begin{proof}
See Appendix E.
\end{proof}
For minimizing the absolute error distance measurement, i.e. $1$-norm error distance, it is possible to use the proposed gradient decent method with the strictly convex function $\mathcal{V}_{u}^{j}(t)=|t|^{1+\epsilon}$, where $\epsilon$ is set to a sufficiently small constant value, $\epsilon \ll 1$.

In the experimental results, the MSE of the true locations and their estimated values are minimized. In this case, the distance function is set as $\mathcal{V}_{u}^{j}(t)=t^{2}$, $\forall u, j$. The distance minimization problem in (\ref{alphaOpt}) is converted to a constrained convex quadratic programming problem (QP) that may be solved through efficient and iterative primal-dual interior-point method \cite{W97}, iterative primal-dual coordinate ascent method \cite{B99} and iterative active set method \cite{GI83}.

\subsection{Section-based Hybrid Localization}\label{Sec:Sectioning}
In order to improve the accuracy of the location estimation, the indoor localization environment is divided into different mutually exclusive sections. The coefficients for each wireless technology is determined for each section. Let us consider the indoor environment is divided into $S$ sections, numbered as $\mathcal{S}=\{1, ..., S\}$. The section of the mobile object is determined based on its estimated location using the proposed method in Section \ref{Sec:Hybrid}. Let us denote $\hat{S}=S(\hat{\mathbf{P}}\left(\mathbf{P}\right))$ as the estimated section that contains the location $\mathbf{P}$. The estimated location $\hat{\mathbf{P}}\left(\mathbf{P}\right)$ is derived from (\ref{LinEst}) by solving (\ref{alphaOpt}). Let us denote $\alpha_{u, i, \hat{S}}$ as the weight of location estimation using wireless technology $i$, $i \in \{1, ..., N\}$ in direction $u$, $u \in \{x, y, z\}$, when point $\mathbf{P}$ is estimated to be located in section $\hat{S}$. Let us denote $\boldsymbol{\alpha}_{u,\hat{S}}=(\alpha_{u, 1,\hat{S}}, ..., \alpha_{u, N,\hat{S}})$ as the coefficient vector of different localization methods in direction $u$ in section $\hat{S}$. Let $\mathcal{A}_{u,\hat{S}}=\{\boldsymbol{\alpha}_{u,\hat{S}}|\alpha_{u, i, \hat{S}} \geq 0, \sum_{i=1}^{N}\alpha_{u, i, \hat{S}}=1\}$ denote all the feasible sets of coefficient vectors in direction $u$ in the estimated section $\hat{S}$. Let $\boldsymbol{\alpha}=(\boldsymbol{\alpha}_{1, 1}, ..., \boldsymbol{\alpha}_{N,S})$ denote the coeffiecient vector for all localization methods in all defined sections. Using the same approach in (\ref{alphaOpt}) and (\ref{alphau}), the optimal coefficient vector, $\boldsymbol{\alpha}^{*}$, minimizes the error distance in vector function $\boldsymbol{\mathcal{V}}^j=\left(\mathcal{V}^{j}_x, \mathcal{V}^{j}_y, \mathcal{V}^{j}_z\right)$, where $\mathcal{V}^{j}_u$, $u \in \{x, y, z\}$, are strictly convex functions in $\mathbb{R}$. Each coefficient vector $\boldsymbol{\alpha}_{u, \hat{S}}^{*}$ is derived as in the following:
\begin{gather}
\boldsymbol{\alpha}_{u,\hat{S}}^{*}=\arg\min_{\boldsymbol{\alpha}_{u,\hat{S}} \in \mathcal{A}_{u, \hat{S}}}\sum_{j: \hat{\bold{P}}(\bold{P}^{j}) \in \hat{S}}\mathcal{V}^{j}_{u}\left(\sum_{i=1}^{N}\alpha_{u, i, \hat{S}}\hat{u}_{i,\hat{S}}^{j}-u^j\right), \nonumber \\
\forall u \in \{x, y, z\}, \forall \hat{S} \in \mathcal{S}.
\label{alphaS}
\end{gather}
The optimal coefficient vector for section-based localization is derived using gradient projection method following the same approach proposed in Section \ref{Sec:Hybrid}. 

\section{Experimental Results}\label{Sec:Numerical}
In this section, experimental and simulation results based on RSSI measurements for Bluetooth, WiFi and ZigBee protocols, are presented. Three wireless transmitters are set up: (i) the CC2650 Bluetooth Sensortag (Texas Instruments Inc., Dallas, TX), (ii) the CC2650 ZigBee Sensortag (Texas Instruments Inc., Dallas, TX) and (iii) WiFi dongle (CanaKit Inc., Vancouver, Canada), where WiFi dongle is connected to a Raspberry Pi mini computer (Raspberry Pi Inc., UK). The two Sensortag transmitters and WiFi dongle transmitter on the Raspberry Pi are attached to a quad-cane (Medline Industries, Mundelein, IL), for measuring the RSSI at different distances as the quad-cane is moved.

Two different sets of multi-section experimets are conducted with and without RFID localization. In RFID localization experiments, a UMaine developed UHF RFID tracker is attached under the quad-cane that localizes the position of the UHF RFID tags embedded under the floor carpet. The UHF RFID tracker is equipped with low weight portable 6800 mAh 5v battery that allows it to track an UHF RFID tag within short range of $30$ cm. The RFID tracker sends RF transmission signal per second. Although the RFID tracking accuracy is less than $30$ cm, using RFID tags are costly. In order to avoid covering the whole environment with RFID tags, the environment is divided into different sections, and the RFID tags are only located at the border of each section to identify the section that tracker is located in. The tracker is localized inside the section using the RSSI information of the low-cost Bluetooth, WiFi and ZigBee transmitters mounted on the quad-cane. The experiments are conducted based on creating a fingerprint data set with measuring RSSI at specific locations in the eastern hallway of the Engineering and Science Research Building (ESRB) at the University of Maine, Orono, ME. 

The experiments are set up for $x$ direction along the $60$ m hallway where the distance between two adjacent fingerprint locations is $0.915$ m. In order to have a fair comparison between the accuracy of each transmission protocol, the Bluetooth, WiFi and ZigBee receivers are located in the same location at the beginning of the hallway. We assume that the RSSI measurement variations are not high between two adjacent fingerprint locations in the training set. Due to the accesibility of all indoor locations, in the case that there exists a point between two specific adjacent fingerprint locations that large RSSI measurement variation is observed, that point is added to the fingerprint data set. Figure \ref{RSSI} shows the RSSI variations with distance of the tracking device from three WiFi, ZigBee and BLE base stations located at the same position in the entrance of the hallway.

\begin{figure}
\includegraphics[width = 0.5 \textwidth]{./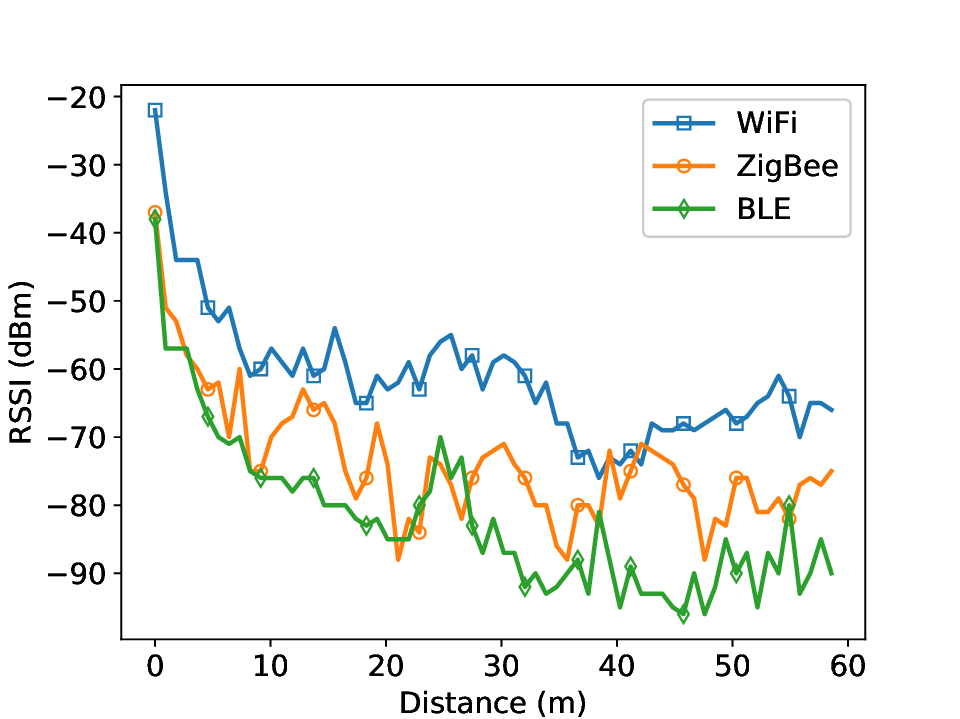}
\caption{RSSI variations of WiFi, ZigBee and BLE signals vs distance of the tracker from the location of the base stations.}
\label{RSSI}
\end{figure} 

Two different scenarios are considered for the experimental test: (i) Environment without measurement noise and (ii) Environment with measurement noise. In the first scenario, we consider an experimental environment in which the RSSI measurement for each location point is fixed at any time of RSSI measurement, and it does not change with time. In this case, the training set and test set data to find the optimal coefficients are the same. In order to increase the accuracy of the measurements, as each point of the environment is accessible, we do the measurements for as much as possible points in the environment. In this scenario we have performed measurements for the consecutive points each is located in $30$ cm distance from its adjacent points. It is recommended to do the measurements for all possible points in the environment such that the variation of measurements between two adjacent points is limited to a certain predetermined value. Figures \ref{BTHybrid}, \ref{WiFiHybrid} and \ref{ZBHybrid} demonstrate the distance estimation based on mean estimation using Bluetooth, WiFi and ZigBee RSSI information, respectively. The estimated location using the proposed hybrid method is also plotted in these figures. As seen in these figures, the proposed hybrid localization method achieves a localization curve that is on average closer to the real distance line in comparison to each individual wireless protocol.
\begin{figure}
\begin{minipage}{0.5\textwidth}
        \centering
        \subfigure[a][Bluetooth]{\includegraphics[width = \textwidth]{./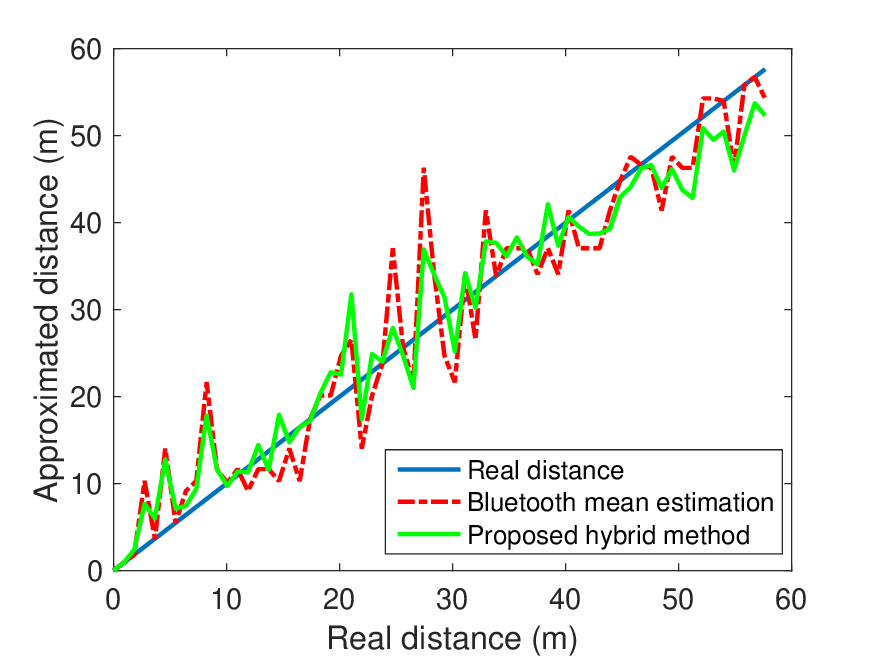}
       \label{BTHybrid}
       }
   \end{minipage}
    \begin{minipage}{0.5\textwidth}
        \centering
            \subfigure[b][WiFi]{\includegraphics[width = \textwidth]{./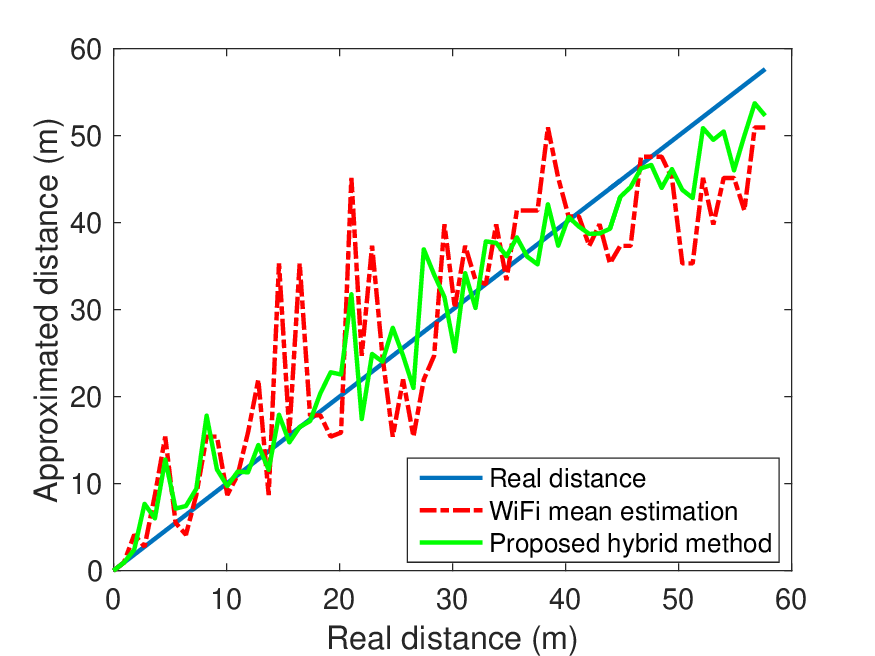}
          \label{WiFiHybrid}
          }
    \end{minipage}
    \begin{minipage}{0.5\textwidth}
           \centering
           \subfigure[c][ZigBee]{
           \includegraphics[width = \textwidth]{./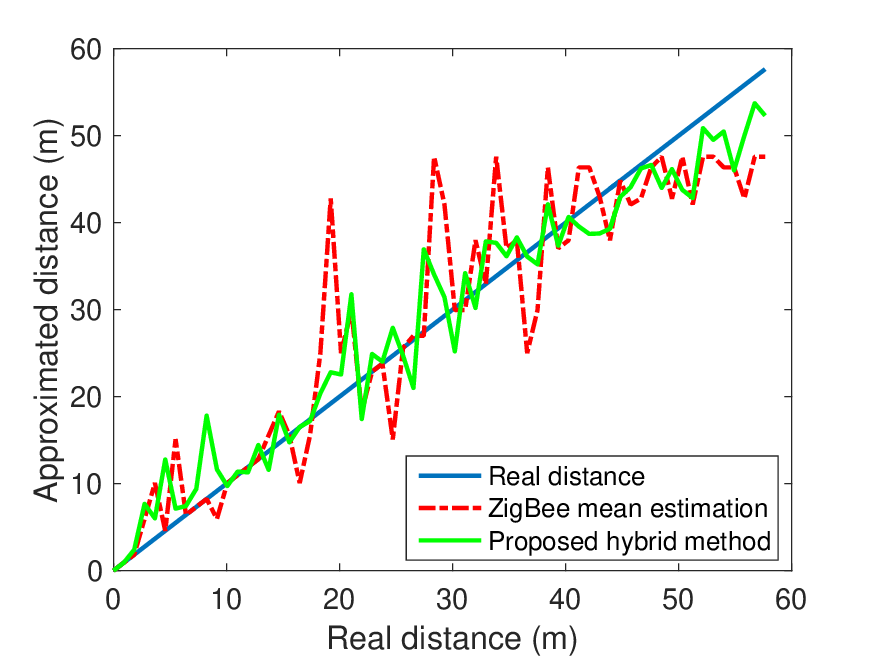}
         \label{ZBHybrid}
          }
     \end{minipage}
\caption{Distance approximation based on mean estimation using (a) Bluetooth, (b) WiFi and (c) ZigBee RSSI information and proposed hybrid method combining all three estimates in the first scenario.}
\end{figure}
Table \ref{Table1} summarizes the mean square error ($p=2$) of the estimated distance with respect to the real distance for each individual localization methods, the proposed hybrid method, proposed section-based method and RFID-based section-based method. The proposed hybrid method finds the MSE estimated distance with combining the Bluetooth, WiFi and ZigBee RSSI-based distance estimations. The proposed hybrid section-based method uses the MSE estimated distance obtained from the proposed hybrid method to estimate the section where the moving object is located in. The method applies equation (\ref{alphaS}) to find the MSE estimated location based on the estimated section in the first use of proposed hybrid localization method. 
In RFID-based section-based method, the sections are determined without error with reading the RFID tags that are implemented on the borders of the sections, and the location is estimated using equation (\ref{alphaS}) for each section.

\begin{table*}[t]
\caption{\doublespacing Mean square error (m) of different localization methods for different distance ranges of measurements in the first scenario}
\centering
\begin{tabular}{c c c c c c c}
\hline \hline
Distance & Bluetooth & WiFi & ZigBee 
& \multicolumn{1}{p{2.5cm}}{\centering Proposed hybrid method with one section and no RFID} 
& \multicolumn{1}{p{3cm}}{\centering Proposed two-level section-estimation method with three sections with no RFID} 
& \multicolumn{1}{p{2.5cm}}{\centering Proposed hybrid method with three sections using RFID sectioning} \\
\hline\\
20 m & 3.16 m & 2.01 m & 2.23 m & 1.86 m & 0.83 m & 0.23 m\\ \\
40 m & 4.42 m & 4.18 m & 4.01 m & 2.77 m & 1.58 m & 0.98 m\\ \\
60 m & 5.12 m & 8.07 m & 6.73 m & 3.98 m & 2.06 m & 1.73 m\\
\end{tabular}
\label{Table1}
\end{table*}
As it is expected, the proposed RFID-based section-based method has the least mean square error with respect to other localization methods represented in Table \ref{Table1}. For $60$ m distance localization, it improves the localization MSE by $66\%$, $79\%$ and $74\%$ percent with respect to Bluetooth, WiFi and ZigBee standards, respectively. In this range, the proposed two layer hybrid section-based method improves the MSE by $60\%$, $75\%$ and $70\%$, with respect to Bluetooth, WiFi and Zigbee standards, respectively. For $60$ m distance localization, the optimal coefficient vector, $\boldsymbol{\alpha}$, is derived as $\boldsymbol{\alpha}=(0.58, 0.17, 0.25)$ for Bluetooth, WiFi and ZigBee standards, respectively. It shows that the proposed hybrid method improves the localization accuracy in terms of mean square distance error by $21\%$, $50\%$, and $40\%$ with respect to each individual localization method using RSSI-based Bluetooth, WiFi and ZigBee standards, respectively.

Although the proposed RFID-based section-based method has the best MSE localization performance, the experimental results show the proposed two-layer section-based method without RFID tracking does not significantly degrade the localization performance.

In the second scenario, in order to increase the robustness of the proposed method to measurement noise, the location dataset is divided into two mutual exclusive sets: (i) training set and (ii) test set. The coefficients of each localization method for each section is derived using the training set and the results are applied to the locations in the test set to find the 
mean absolute error (MAE) distance. 
For MAE, the distance function in (\ref{falpha}) is set to $\mathcal{V}_{x}(t)=|t|^{1+\epsilon}$ where $\epsilon=0.0001$. In this experiment, $70$ percent of data is randomly chosen as the training set and the remainig $30$ percent is used as the test set. The experiment is repeated for $1000$ times and the average MAE of the proposed method on the test set data is shown in figure \ref{MAE}. 

For each experiment, the environment is divided into equally length sections with UHF RFID tags located on the borders of the sections. For only RFID method, the estimated location is selected as the middle point of the section. As it is seen in figure \ref{MAE}, the values of 
MAE plots are very close to each other. This means that changing in distance function does not improve the desired localization accuracy for this scenario. As it is seen in this figure, the proposed RFID and RSSI-based hybrid method has better performance of RFID and each individual RSSI-based localization method. However, when the environment is divided to more than three sections, due to the smaller size of the sections, high level measurement inaccuracies caused by multipath fading inside the building and separation of training and test datasets in this scenario, the proposed hybrid method does not improve the performance of RFID only method which estimates the location as the middle point of the identified section.


\begin{figure}
\centering
\includegraphics[width = 0.5 \textwidth]{./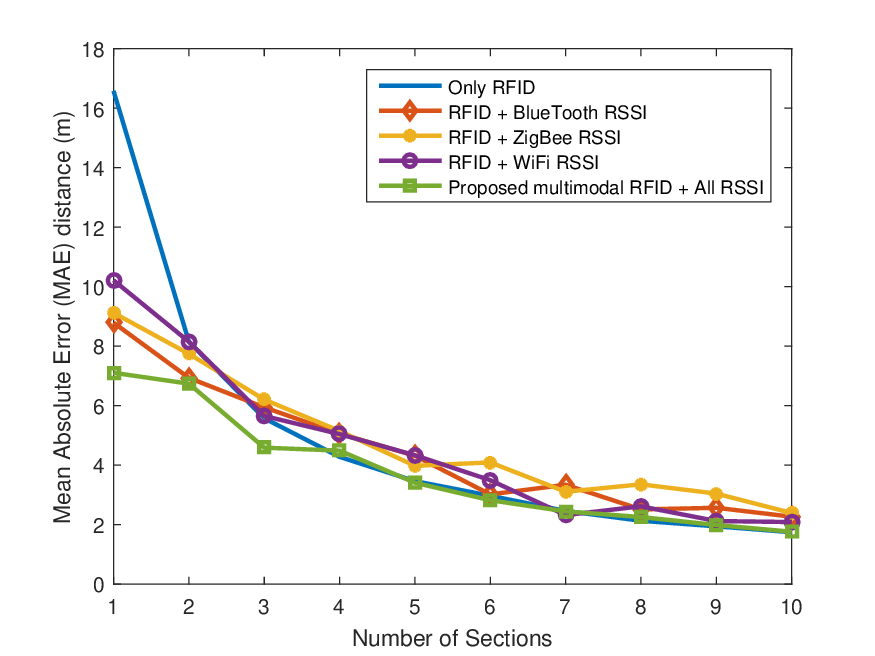}
\caption{Mean absolute error (MAE) distance vs. number of environment sectioning using UHF RFID developed tracker for different proposed methods in the second scenario.}
\label{MAE}
\end{figure}

\section{Concluding Remarks and Future Work}
\label{Sec:Conclude}
In this paper, a hybrid localization method is proposed that combines different IoT wireless localization methods to achieve a more accurate estimated location. The proposed hybrid method is based on the linear combination of the position estimates of each individual localization protocol such that a genaralized distance error of the hybrid location estimation, including the mean square error and mean absolute error, is minimized. Minimizing the mean square or mean absolute distance error ensures that the estimated localization curve, on average basis of the number of the measured estimated locations, is closer to the true distance curve. In other words, energy of error is minimized. A promising direction to decrease the mean square error even more, is proposed which involves partitioning of the environment into multiple sections such that the coefficient vectors for each section is different with other sections. The experimental results using IoT wireless technologies such as Bluetooth, WiFi and ZigBee protocols and developed RFID tracking device show that the proposed hybrid localization method achieves less mean square and mean absolute error in comparison with each individual RSSI-based localization methods. For wider range of experimental studies, the proposed hybrid method can also be applied to two and three dimensional localizations that require multiple base stations for each IoT wireless technology.


\section{Appendix}
\subsection{Proof of Lemma \ref{CPD}:}
\begin{proof}
As we assumed all the localization information vectors are linearly independent, thus the equation $\mathbf{U}\mathbf{x} = 0$ has the unique solution $\mathbf{x}=\mathbf{0}$. For any vector $\mathbf{x}$, we have:
\begin{gather}
\mathbf{x}^{T} \mathbf{C} \mathbf{x} = \mathbf{x}^{T} \mathbf{U}^{T} \mathbf{U} \mathbf{x} \nonumber \\
= ||\mathbf{U} \mathbf{x}||_{2}^{2} \geq 0,
\end{gather}
The equality happens iff $\mathbf{x} = \mathbf{0}$. Thus, $\mathbf{C}$ is a positive definite matrix.
\end{proof}
\subsection{Proof of Theorem \ref{Thm1}:}
\begin{proof}
The partial derivative of $f_{u} \left( \mathbf{p} \right)$, $\mathbf{p} \in \mathcal{A}_{u}$, with respect to $\alpha_{u, i}$ is derived as
\begin{gather}
\frac{\partial{f_{u}}(\mathbf{p})}{\alpha_{u, i}} = \sum_{j=1}^{M} \hat{u}_{i}^{j} \mathcal{V}' \left( \sum_{t=1}^{N} p_{t} \hat{u}_{t}^{j} - u^{j} \right).
\label{partial1}
\end{gather}
For $\mathbf{p}$ and $\mathbf{q}$ in $\mathcal{A}_{u}$ we have:
\begin{gather}
\frac{\partial{f_{u}}(\mathbf{p})}{\alpha_{u, i}} - \frac{\partial{f_{u}}(\mathbf{q})}{\alpha_{u, i}} \nonumber \\
= \sum_{j=1}^{M} \hat{u}_{i}^{j} \left( \mathcal{V}' \left( \sum_{t=1}^{N} p_{t} \hat{u}_{t}^{j} - u^{j} \right) - \mathcal{V}' \left( \sum_{t=1}^{N} q_{t} \hat{u}_{t}^{j} - u^{j} \right) \right).
\label{partialdiff1}
\end{gather}
As $\mathcal{V}'$ is differentiable and contiuous in $\mathbb{R}$, based on Mean Value Theorem, there exists $r^{j}$ between $\sum_{t=1}^{N} p_{t} \hat{u}_{t}^{j} - u^{j}$ and $\sum_{t=1}^{N} q_{t} \hat{u}_{t}^{j} - u^{j}$ such that
\begin{gather}
\mathcal{V}'' \left( r^{j} \right) = \frac{\mathcal{V}' \left( \sum_{t=1}^{N} p_{t} \hat{u}_{t}^{j} - u^{j} \right) - \mathcal{V}' \left( \sum_{t=1}^{N} q_{t} \hat{u}_{t}^{j} - u^{j} \right)}{\sum_{t=1}^{N} u_{t}^{j} \left( p_{t} - q_{t} \right)}.
\label{MeanValueTheorem1}
\end{gather}
Note that $r^{j}$ is derived independently from $i$, and it is fixed for all $i$. Substituting (\ref{MeanValueTheorem1}) in (\ref{partialdiff1}), results
\begin{gather}
\frac{\partial{f_{u}}(\mathbf{p})}{\alpha_{u, i}} - \frac{\partial{f_{u}}(\mathbf{q})}{\alpha_{u, i}} = \sum_{t=1}^{N} \left( \sum_{j=1}^{M} \mathcal{V}''\left( r^{j} \right) \hat{u}_{i}^{j} \hat{u}_{t}^{j} \right) \left( p_{t} - q_{t}\right) 
\label{partialdiff2}
\end{gather}

Using the inequality $0 < \mathcal{V}''(r^{j}) \leq B_u^{j}$ and taking the norm of (\ref{partialdiff2}) results in:
\begin{gather}
|| \nabla f_{u} \left( \mathbf{p} \right) - \nabla f_{u} \left( \mathbf{q} \right) || \nonumber \\
\leq N \max_{i, k} \left \{ \sum_{j=1}^{M} \left| \hat{u}_{i}^{j} \hat{u}_{k}^{j} \right| B_{u}^{j}  \right \} || \mathbf{p} - \mathbf{q} || = N L^{\max} || \mathbf{p} - \mathbf{q}||.
\end{gather}

Based on Proposition 2.3.2 and equation (2.30) in \cite{B99}, for $0 < \beta_u < \frac{2}{N L^{\max}}$, $f_{u}(\boldsymbol{\alpha}_{u}^{k})$ decreases as $k$ increases:
\begin{gather}
f_u(\boldsymbol{\alpha}_{u}^{k+1})-f_{u}(\boldsymbol{\alpha}_{u}^{k}) \nonumber \\
\leq \left( \frac{N L^{\max}}{2} -\frac{1}{\beta_{u}}\right)||\boldsymbol{\alpha}_u^{k+1}-\boldsymbol{\alpha}_u^{k}||_{2}^{2}
\label{fk}
\end{gather}
As the right side of (\ref{fk}) is non-positive, $f_u(\boldsymbol{\alpha}_u^{n+1})=f_u(\boldsymbol{\alpha}_u^{n})$ if and only if $\boldsymbol{\alpha}_{u}^{n+1}=\boldsymbol{\alpha}_{u}^{n}$, $\exists n \geq 0$. In this case, $\boldsymbol{\alpha}_{u}^{k}=\boldsymbol{\alpha}_{u}^{n}$ for all $k \geq n$, and $\tilde{\boldsymbol{\alpha}}_{u}^{*}=\boldsymbol{\alpha}_{u}^{n}$ is a limit point of $\{\boldsymbol{\alpha}^{k}\}_{k=1}^{\infty}$. Thus, the sequence $\{f_u(\boldsymbol{\alpha}_{u}^{k})\}_{k=1}^{\infty}$ converges monotonically decreasing to $f_{u}^*=f_{u}(\tilde{\boldsymbol{\alpha}}^{*}_{u})$. The function $f_u(\alpha_u)$ is continuous in $\mathbb{R}^{N}$ and is bounded below by zero. Based on the monotone convergence Theorem \cite{T12}, the sequence $\{f_u(\boldsymbol{\alpha}_u^k)\}_{k=1}^{\infty}$ converges to a limit point $f_u^*=\lim_{k \rightarrow \infty}f_u(\boldsymbol{\alpha}_u^k)$.

For the only other possible case, $\nexists n \geq 0$ such that $\boldsymbol{\alpha}_{u}^{n} = \boldsymbol{\alpha}_{u}^{n+1}$. In this case, $f_{u}(\boldsymbol{\alpha}_{u}^{k})$ monotonically decreases as $k$ increases, i.e. $f_{u}(\boldsymbol{\alpha}_{u}^{n}) < f_{u}(\boldsymbol{\alpha}_{u}^{m})$, for $m>n \geq 0$. The function $f_u(\boldsymbol{\alpha}_k)$ is a non-negative, continuous function that is bounded below by zero. Based on monotone convergence Theorem, $\{f_u({\boldsymbol{\alpha}_u^{m}})\}_{m=1}^{\infty}$ converges to a limit point $f^{*}=\lim_{m \rightarrow \infty}f(\boldsymbol{\alpha}^{m})$. As $\mathcal{A}_u$ is a compact set, according to Bolzano-Weierstrass Theorem \cite{R76}, there exists an infinite subset $\{\tilde{\boldsymbol{\alpha}}_{u}^{k}\}_{k=1}^{\infty} \subseteq \{\boldsymbol{\alpha}_u^{m}\}_{m=1}^{\infty}$ that has a limit point in $\mathcal{A}_{u}$, i.e. $\exists \tilde{\boldsymbol{\alpha}}_u^* \in \mathcal{A}_{u}: \lim_{n \rightarrow \infty}\tilde{\boldsymbol{\alpha}}_u^{n}=\tilde{\boldsymbol{\alpha}}_u^*$. As $f_u(\boldsymbol{\alpha}_u)$ is continuous in $\mathcal{A}_u$, we have: $f^*_u=\lim_{m \rightarrow \infty} f(\boldsymbol{\alpha}_u^{m})=\lim_{m \rightarrow \infty}f(\tilde{\boldsymbol{\alpha}}_u^{m})=f(\tilde{\boldsymbol{\alpha}}^*)$.

In both cases, according to Proposition 2.3.2 in \cite{B99}, the limit point is stationary and satisfies the optimality condition: $\nabla f_u(\tilde{\boldsymbol{\alpha}}_u^{*})^{T}(\boldsymbol{\alpha} - \tilde{\boldsymbol{\alpha}}_{u}^{*}) \geq 0, \forall \boldsymbol{\alpha} \in \mathcal{A}_{u}$. From the convexity of $f_{u}(\boldsymbol{\alpha}_{u})$ in $\mathcal{A}_{u}$, Proposition 2.1.2 in \cite{B99} implies that the limit point $\boldsymbol{\alpha}_{u}^{*}$ minimizes $f_{u}(\boldsymbol{\alpha}_{u})$ over $\mathcal{A}_u$. 
\end{proof}
\subsection{Proof of Lemma \ref{Lem1}:}
\begin{proof}
Let us define the Lagrangian function
\begin{gather}
\mathcal{L}(\boldsymbol{\alpha}_u, \boldsymbol{\lambda})=\sum_{i=1}^{N}(c_{i,u}^{k}+\alpha_{i,u})^2 - 2 \lambda_{u}\left(\sum_{i=1}^{N}\alpha_{i,u}-1\right).
\label{Lagrange1}
\end{gather}
Referring to Proposition 3.4.1 in \cite{B99}, the feasible coefficient vector $\boldsymbol{\alpha}_{u}^{*}$ is an optimal solution for (\ref{Lagrange1}), if and only if there is the Lagrangian coefficient $\lambda_u^{*} \in \mathbb{R}$ such that $\boldsymbol{\alpha}_u^{*}$ minimizes $\mathcal{L}(\boldsymbol{\alpha}_u, \lambda_u^*)$ over $(\mathbb{R}^{+})^{N}$  
\begin{gather}
\boldsymbol{\alpha}_{u}^{*}=\arg\min_{\boldsymbol{\alpha}_{u} \geq \mathbf{0}}\mathcal{L}\left(\boldsymbol{\alpha}_u, \lambda_u^*\right) \nonumber \\
=\arg\min_{\boldsymbol{\alpha}_u \geq \mathbf{0}}\sum_{i=1}^{N}\left(c_{i,u}^{k}+\alpha_{i,u} - \lambda_{u}^{*}\right)^2 \nonumber \\
=\sum_{i=1}^{N}\arg\min_{\alpha_{i,u} \geq 0}\left(c_{i,u}^{k}+\alpha_{i,u} - \lambda_{u}^{*}\right)^2.
\label{Lagrange2}
\end{gather}
The optimal solution for (\ref{Lagrange2}) is derived by minimizing each term $(c_{i,u}^{k}+\alpha_{i,u} - \lambda_{u}^{*})^2$ over $\alpha_{i,u} \in \mathbb{R}^{+}$ as
\begin{equation}
\alpha_{i,u}^{*}=[\lambda_u^* - c_{i,u}^{k}]^{+}, \hspace{5mm} i = 1, ..., N.
\end{equation}
Let us denote $g(\lambda_u)=\sum_{i=1}^{N}[\lambda_u - c_{i,u}^{k}]^{+}-1$. This function is continuous and strictly increasing in $\mathbb{R}$, and we have $g(\min_{i}c_{i,u}^{k})=-1$, and $g(\min_{i}c_{i,u}^{k} + 1) \geq 0$. Therefore, there is only a unique $\lambda_u^* \in \Lambda$ that satisfies the constraint  $g(\lambda_u^{*}) = \sum_{i=1}^{N}\alpha_{i,u}^{*}-1 = 0$.
\end{proof}
\subsection{Proof of Lemma \ref{Lem2}:}
\begin{proof}
As the function $y(n) = \sum_{i=1}^{n - 1}[m_n - m_i] - 1$ is strictly increasing in $n \in \{1, ..., N + 1\}$, and we have: $y(1) = -1$ and $y(N + 1) \geq 0$, there exists a unique value $n$ such that $\sum_{i=1}^{n - 1}[m_n - m_i] - 1 \leq 0$ and $\sum_{i=1}^{n}[m_{n + 1} - m_i] - 1 \geq 0$. As the function $g(\lambda_u)=\sum_{i=1}^{N}[\lambda_u - c_{i,u}^{k}]^{+} - 1$ is continuous and strictly increasing in $\mathbb{R}$, it concludes that $m_{n} \leq \lambda_u^{*} \leq m_{n+1}$. We have:
\begin{gather}
0=g(\lambda_u^*)=\sum_{i=1}^{N}[\lambda^{*}_u - c_{i,u}^{k}]^{+} - 1=\sum_{i=1}^{n}\left(\lambda^{*}_u - m_{i}\right) - 1.
\label{lu}
\end{gather}
Thus, we have: $\lambda_{u}^{*} = (1+\sum_{i=1}^{n}m_{i})/n$.
\end{proof}

\subsection{Proof of Theorem \ref{Thm2}:}
\begin{proof} As the generalized distance function $\mathcal{V}_{u}^{j}(x)$ is strictly convex in $\mathbb{R}$, $f_{u}\left( \boldsymbol{\alpha}_{u} \right)$ in (\ref{falpha}) is convex in $\mathcal{A}_{u}$. Let us denote $\boldsymbol{\alpha}_{u}^{*}$ as an optimal solution for (\ref{falpha}). Note that as $\boldsymbol{\alpha}_{u}^{*} \in \mathcal{A}_{u}$ minimizes $f_{u}(\boldsymbol{\alpha}_{u})$, from the first order inequality constraint, we have:
\begin{gather}
\nabla f(\boldsymbol{\alpha}_{u}^{*})^{T} \left( \boldsymbol{\alpha}_{u} - \boldsymbol{\alpha}^{*} \right) \geq 0, \hspace{5mm} \forall \boldsymbol{\alpha}_{u} \in \mathcal{A}_{u}.
\label{FirstOrder}
\end{gather}
For $\beta_{u}>0$, the first order inequality in (\ref{FirstOrder}) is equivalent to:
\begin{gather}
\left( \left( \boldsymbol{\alpha}_{u}^{*} - \beta_{u} \nabla f(\boldsymbol{\alpha}^{*}_{u}) \right) - \boldsymbol{\alpha}_{u}^{*} \right)^{T} \left( \boldsymbol{\alpha}_{u} - \boldsymbol{\alpha}^{*} \right) \leq 0, \hspace{5mm} \forall \boldsymbol{\alpha}_{u} \in \mathcal{A}_{u}.
\label{AppDProjection}
\end{gather}
Using Proposition 2.1.3 (b) in \cite{B99}, the inequality in (\ref{AppDProjection}) holds if and only if $\boldsymbol{\alpha}_{u}^{*}$ is the projection of $\left( \boldsymbol{\alpha}_{u}^{*} - \beta_{u} \nabla f(\boldsymbol{\alpha}^{*}_{u} \right )$ on $\mathcal{A}_{u}$, i.e.:
\begin{gather}
\boldsymbol{\alpha}_{u}^{*} = \mathcal{P}_{\mathcal{A}_{u}}\left[ \boldsymbol{\alpha}_{u}^{*} - \beta_{u} \nabla f_{u} \left( \boldsymbol{\alpha}_{u}^{*} \right) \right], \hspace{5mm} \forall \beta_{u}>0.
\label{AppDProjection2}
\end{gather}
From convexity of $\mathcal{A}_{u}$, for any point $\mathbf{A}$ and $\mathbf{B}$ in $\mathbb{R}^{N}$, the distance between projection of $\mathbf{A}$ and $\mathbf{B}$ on $\mathcal{A}_{u}$, respectively denoted as $\mathcal{P}_{\mathcal{A}_{u}}[\mathbf{A}]$ and $\mathcal{P}_{\mathcal{A}_{u}}[\mathbf{B}]$, is less or equal to the distance between $\mathbf{A}$ and $\mathbf{B}$:
\begin{gather}
|| \mathcal{P}_{\mathcal{A}_{u}}\left[ \mathbf{A} \right] - \mathcal{P}_{\mathcal{A}_{u}}\left[ \mathbf{B} \right] || \leq ||\mathbf{A} - \mathbf{B}||, \hspace{5mm} \forall \mathbf{A}, \mathbf{B} \in \mathbb{R}^{N}.
\label{ProjectionDistanceInequality}
\end{gather}
Substitutting $\boldsymbol{\alpha}_{u}^{k}, \boldsymbol{\alpha}_{u}^{*} \in \mathcal{A}_{u}$ in (\ref{partialdiff2}), we have:
\begin{gather}
\nabla f_{u}\left( \boldsymbol{\alpha}_{u}^{k}\right) - \nabla f_{u}\left( \boldsymbol{\alpha}_{u}^{*} \right) = \mathbf{D}^{k} \left( \boldsymbol{\alpha}_{u}^{k} - \boldsymbol{\alpha}_{u}^{*} \right).
\end{gather}
where $D_{im}^{k}=\frac{\partial^{2} f_{u}(\mathbf{r}^{i})}{\partial \alpha_{u,i} \partial \alpha_{u, m}}=\sum_{j=1}^{M} \hat{u}^{j}_{i}\hat{u}^{j}_{m} {\mathcal{V}_{u}^{j}} '' \left( \sum_{l=1}^{N} r_{l}^{i} \hat{u}_{l}^{j} - u^{j} \right)$. From $0 < l^{\min} \leq D_{ij} \leq L^{\max}$, $i = 1, .., N$ and $0 < \beta_{u} < \frac{2}{N L^{\max}}$, we have:
\begin{gather}
-1 < 1 - \beta_{u} N L^{\max} \leq 1 - \beta_{u} \sum_{m=1}^{N}D^{k}_{im} \nonumber \\
\leq 1 - \beta_{u} N l^{\min} < 1,
\label{Thm2LminLmax}
\end{gather}
or
\begin{gather}
|1 - \beta_{u} \sum_{m=1}^{N}D^{k}_{im}| \nonumber \\
 \leq q = \max\{ |1 - \beta_{u} N l^{\min}|, |1 - \beta_{u} N L^{\max}| \} < 1.
\end{gather}
Following the same argument in (2.34) in \cite{B99}, substituting from (\ref{alphak}) and (\ref{AppDProjection2}), and using (\ref{ProjectionDistanceInequality}), we have:
\begin{gather}
||\boldsymbol{\alpha}_{u}^{k+1} - \boldsymbol{\alpha}_{u}^{*}|| = ||\mathcal{P}_{\mathcal{A}_{u}}\left[ \boldsymbol{\alpha}_{u}^{k} - \beta_{u} \nabla f_{u} \left( \boldsymbol{\alpha}_{u}^{k}\right)\right] \nonumber \\
- \mathcal{P}_{\mathcal{A}_{u}}\left[ \boldsymbol{\alpha}_{u}^{*} - \beta_{u} \nabla f_{u} \left( \boldsymbol{\alpha}_{u}^{*} \right) \right]|| \nonumber \\
\leq ||\left( \boldsymbol{\alpha}_{u}^{k} - \beta_u \nabla f_{u} \left( \boldsymbol{\alpha}_{u}^{k} \right) \right) - \left( \boldsymbol{\alpha}_{u}^{*} - \beta_u \nabla f_{u} \left( \boldsymbol{\alpha}_{u}^{*} \right) \right) || \nonumber \\
= || \left( I - \beta_{u} \mathbf{D}^{k}) \right) \left( \boldsymbol{\alpha}_{u}^{k} - \boldsymbol{\alpha}_{u}^{*} \right) || \nonumber \\
\leq \max\{|1 - \beta_{u} N l^{\min}|, |1 - \beta_{u} N L^{\max}| \} ||\boldsymbol{\alpha}_{u}^{k} - \boldsymbol{\alpha}_{u}^{*}|| \nonumber \\
= q ||\boldsymbol{\alpha}_{u}^{k} - \boldsymbol{\alpha}_{u}^{*}||.
\end{gather}
Thus, the update process convergences with linear rate of $q$. 

From properties of $\boldsymbol{\alpha}_{u} \in \mathcal{A}_{u}$ in (\ref{Cons}), and $0 \leq \alpha_{u,i} \leq 1$, $i = 1, ..., N$, we have:
\begin{gather}
\sum_{i=1}^{N} \alpha_{u,i}^{2} \leq \sum_{i=1}^{N} \alpha_{u, i} = 1, \hspace{5mm} \alpha_{u} \in \mathcal{A}_{u}.
\label{alphaSquareSum}
\end{gather}
Starting from $\boldsymbol{\alpha}_{u}^{0}=\left( \frac{1}{N}, ..., \frac{1}{N} \right)$, using (\ref{alphaSquareSum}), we have:
\begin{gather}
||\boldsymbol{\alpha}_{u}^{k+1} - \boldsymbol{\alpha}_{u}^{*}||^{2}_2 \leq q^{2k} ||\boldsymbol{\alpha}_{u}^{0} - \boldsymbol{\alpha}_{u}^{*} ||^{2}_2 \nonumber\\
= q^{2k} \sum_{i=1}^{N} \left( \frac{1}{N} - \boldsymbol{\alpha}_{u, i}^{*} \right)^{2} = q^{2k} \left(\sum_{i=1}^{N} \left( \alpha_{u, i}^{*} \right)^{2} - \frac{1}{N} \right) \nonumber \\
\leq q^{2k} \left( 1 - \frac{1}{N}\right)
\label{norm2q2k}
\end{gather}
In order to guarantee that the update process in (\ref{alphak}) converges to the $\epsilon$-neighborhood of the optimal solution, starting from $\boldsymbol{\alpha}_{u}^{0}= \left( \frac{1}{N}, ..., \frac{1}{N} \right)$, using (\ref{norm2q2k}), we have:
\begin{gather}
||\boldsymbol{\alpha}_{u}^{k+1} - \boldsymbol{\alpha}_{u}^{*}||_{2} \leq q^{k} \sqrt{\left( 1 - \frac{1}{N}\right)} < \epsilon.
\end{gather}
In order that the solution of the update process at step $k+1$ lies in the $\epsilon$-neighborhood of $\boldsymbol{\alpha}_{u}^{*}$, it is sufficient that:
\begin{equation}
k > \log_{\left( \frac{1}{q} \right)} \frac{ \sqrt{ 1 - \frac{1}{N} } }{\epsilon} = \frac{\ln \frac{\sqrt{1 - \frac{1}{N}}}{\epsilon}}{\ln \frac{1}{q}},
\end{equation}
where $\ln(x)$ is the logarithm of $x$ in the natural base $e$.

As discussed in Section \ref{OptimalProjectionAlgorithm}, finding optimal solution of each update process using Lemma \ref{Lem2} takes $O(N \log N)$ number of calculations. Thus, running the update process for $\log_{\left( \frac{1}{q} \right)} \frac{ \sqrt{ 1 - \frac{1}{N} } }{\epsilon}$ times requires $O\left(N \log \left( N \right) \log_{\left( \frac{1}{q} \right)} \frac{ \sqrt{ 1 - \frac{1}{N} } }{\epsilon} \right)$ total number of calculations.
\end{proof}


\end{document}